\def\be{\begin{equation}}
\def\ee{\end{equation}}
\def\ba{\begin{array}{c}}
\def\ea{\end{array}}
\def\ben{$$}
\def\een{$$}
\newcommand{\bea}{\begin{eqnarray}}
\newcommand{\eea}{\end{eqnarray}}
\newcommand{\pbr}{\prec\!}
\newcommand{\pkt}{\!\!\succ\,\,}
\newcommand{\kt}{\rangle}
\newcommand{\br}{\langle}
\newtheorem{thm}{Theorem}
\newenvironment{proof}{\noindent {\bf Proof.}}{\hfill$\square$\vspace{3mm}\endtrivlist}
\begin{document}

\begin{center}

{\Large \bf

Non-Hermitian interaction representation and its use in relativistic
quantum mechanics

  }

\vspace{9mm}

{Miloslav Znojil}

\vspace{9mm}

Nuclear Physics Institute CAS, 250 68 \v{R}e\v{z}, Czech Republic

{znojil@ujf.cas.cz}

{http://gemma.ujf.cas.cz/\~{}znojil/}

\end{center}

\section*{Abstract}

The textbook interaction-picture formulation of quantum mechanics is
extended to cover the unitarily evolving systems in which the
Hermiticity of the observables is guaranteed via an {\em ad hoc\,}
amendment of the inner product in Hilbert space. These systems are
sampled by the Klein-Gordon equation with a space- and
time-dependent mass term.

\section*{Keywords}

relativistic quantum mechanics; unitary evolution; non-Hermitian
interaction picture; state vectors; Schr\"{o}dinger-type equations;
observables; Heisenberg-type equations; Klein-Gordon particles;
instant- and position-dependent mass;

\newpage

\section{Introduction \label{oh} }

Quantum mechanics formulated in the Dirac's {\it alias\,}
interaction picture offers a fairly universal representation of the
unitarily evolving quantum systems as well as a useful methodical
bridge between the most economical Schr\"{o}dinger picture  and its
slightly older, more intuitive Heisenberg-picture  alternative
\cite{Messiah}. Unfortunately, in the eyes of mathematical
physicists the overall image of the interaction picture approach is
partially damaged by its frequent use in the areas prohibited by the
Haag's theorem \cite{Haag}. The applicability of the recipe was
rigorously disproved in multiple models of quantum fields
\cite{Hall}. Nevertheless, one may feel surprised that the review
paper ``Nine formulations of quantum mechanics'' \cite{nine} did not
mention the interaction picture even without moving into the quantum
field theory. This makes an impression of a misunderstanding because
the authors of the review admit, at the same time, the usefulness of
having alternative approaches. They emphasized that from a purely
technical point of view ``no formulation produces a royal road to
quantum mechanics'' \cite{nine}.

In the technical setting we are witnessing a remarkable new progress
during the last few years. The conventional menu of pictures was
enriched by the descriptions of the unitarily evolving quantum
systems in which the optimal representations of certain observables
appear non-Hermitian (cf., e.g.,~\cite{Dyson,Geyer}). In particular,
multiple specific merits of the non-Hermitian evolution equations
were found to follow from the use of the generalized stationary
non-Hermitian Scjr\"{o}dinger picture \cite{Carl,ali,book}. In this
amended framework people started using, with great success and
impact, innovated stationary non-Hermitian Hamiltonians $H \neq
H^\dagger$ with real spectra and exhibiting certain  unusual
features like parity-time symmetry, etc (see the recent review of
history \cite{MZbook} and a few more detailed comments in section
{\ref{seci1}}  below).

The scope of the early extensions of the non-Hermitian formalism to
non-stationary dynamical scenarios \cite{Fringlas,PLB} was strongly
limited by the traditional belief that every consistent form of the
non-stationary generator of the evolution of wave functions (let us
denote this generator by a dedicated symbol $G(t)$) must be
observable. This belief proved unexpectedly deeply rooted (cf. also
the early extensive discussion \cite{web} or Theorem 2 in
\cite{ali}). Once this belief was reclassified as artificial and
redundant, the non-Hermitian formalism was immediately generalized
to non-stationary systems \cite{timedep,SIGMA}. Its rather
complicated technical nature inspired its further simplifications
\cite{IJTP,Luiz} and, finally, a reduction to the special stationary
non-Hermitian Heisenberg picture \cite{Heisenberg} with certain
specific practical merits as discussed, e.g., in
Refs.~\cite{symmetry,cinani}.

The difficulty of a return to the full-fledged non-stationary
formalism of the non-Hermitian interaction picture remained
perceived as a challenge \cite{Wang,Maamache,FringMou}. In our
present paper we shall return to the subject, with the emphasis on
the building of a methodical bridge between the alternative
non-Hermitian formulations of quantum mechanics. We shall summarize
the recent developments in this direction, formulate a new
consistent version of the non-Hermitian interaction picture and we
shall illustrate its applicability via a realistic, relativistic
quantum-mechanical model.

The material will be organized as follows. In sections \ref{seci1}
and \ref{secwe} we shall describe the necessary methodical
preliminaries. We shall recall the notation of review~\cite{SIGMA}
and we shall explain that the phenomenological models with both
stationary and non-stationary non-Hermitian generators of wave
functions can be perceived as unitary and phenomenologically
admissible. In sections \ref{seci2}, \ref{secweb}, \ref{seci3b} and
\ref{cwe} the formalism will be made complete while in
section~\ref{seci3} the theory will be applied to the Klein-Gordon
model, with the mass term being both space- and time-dependent. In
the last two sections \ref{seci4} and \ref{seci5} we shall discuss
some of the possible further innovative consequences of our approach
in the broader context of physics.

\section{\label{seci1}Stationary quantum systems in the
non-Hermitian Schr\"{o}dinger picture}

The existing nontrivial applications of the non-Hermitian
representations of unitary quantum systems are almost exclusively
constructed in the Schr\"{o}dinger picture (SP) framework. The main
reason lies in the enhanced importance of the mathematical economy
of the non-Hermitian formalism, much more rarely achieved in the
alternative Heisenberg picture (HP, \cite{Heisenberg,cinani}).

\subsection{Conventional Hermitian observables: Limits of applicability}

An arbitrary quantum system ${\cal S}$ may be studied in the most
common Schr\"{o}dinger picture of textbooks, in principle at least.
The system is represented, in pure state, by an element of a
preselected Hilbert space ${\cal H}^{(textbook)}$, i.e., by a
ket-vector $|\psi^{(SP)}(t)\pkt \in {\cal H}^{(textbook)}$ which is
time-dependent. After one prepares this ket vector at an initial
time $t_i=0$, the task is to predict the result of a measurement
performed at a future time $t_f>0$. This prediction is
probabilistic. Whenever we are interested in the measurement of an
observable represented by a stationary self-adjoint operator
$\mathfrak{q}_{(SP)}$ (or, in general, by the variable
$\mathfrak{q}_{(SP)}(t)$), the prediction is prescribed in terms of
matrix element
 \be
 \pbr \psi^{(SP)}(t_f)|\mathfrak{q}_{(SP)}(t_f) |\psi^{(SP)}(t_f)
 \pkt\,.
 \label{MEAS}
 \ee
The experimentally relevant information about the unitary evolution
of the system ${\cal S}$ is all carried by the wave-ket solutions of
Schr\"{o}dinger equation
 \be
 {\rm i} \frac{\partial}{\partial t} \,|\psi(t)\pkt =
 \mathfrak{h}_{(SP)}\,|\psi(t)\pkt\,,
 \ \ \ \ \
 \ \ \ \ \ |\psi(t)\pkt \in {\cal H}^{(textbook)}\,.
 \label{SET}
 \ee
The unitarity of the evolution is equivalent, due to the Stone's
theorem \cite{Stone}, to the self-adjoint nature of the system's
Hamiltonian, $\mathfrak{h}_{(SP)}=\mathfrak{h}_{(SP)}^\dagger\,$ in
${\cal H}^{(textbook)}$.

For the stationary self-adjoint Hamiltonians
$\mathfrak{h}_{(SP)}\neq \mathfrak{h}_{(SP)}(t)$ the solution of
Eq.~(\ref{SET}) may proceed via their diagonalization. In realistic
systems ${\cal S}$ even this diagonalization may happen to be a
formidable, practically next-to-impossible task. One of its
efficient {\em simplifications} is provided by a non-Hermitian
modification of Schr\"{o}dinger picture. This approach, usually
attributed to Dyson \cite{Dyson}, found a number of successful
applications in nuclear physics \cite{Geyer}. Its key features will
be described in the next paragraph.

\subsection{The change of Hilbert space using stationary
Dyson map}

Dyson \cite{Dyson} revealed that in many-fermion quantum systems
${\cal S}_{(Dyson)}$ one of the key technical obstacles lies in an
extremely unfriendly nature of the underlying conventional fermionic
Hilbert space ${\cal H}^{(textbook)}$ as well as in an equally
unfriendly nature of the eigenstates $|\psi_n\pkt$ of the system's
stationary self-adjoint Hamiltonians $\mathfrak{h}_{(SP)}$. As long
as at least a part of the difficulty lies in the fermion-fermion
correlations, Dyson proposed that these correlations might be
separated and simulated via an {\it ad hoc\,} operator,
 \be
 |\psi_n^{(correlated)}\pkt=\Omega_{(Dyson)}|\psi_n^{(simplified)}\kt
 \,.
 \label{lumapo}
 \ee
The best candidates for the correlation operators $\Omega_{(Dyson)}$
appeared to be non-unitary,
 \be
  \Omega_{(Dyson)} \neq
\Omega_{(Dyson)}(t)],,\ \ \ \ \ \
 \Omega^\dagger_{(Dyson)}\Omega_{(Dyson)}=\Theta_{(stationary)}
  \neq I\,.
 \label{nontri}
 \ee
Dyson decided to make their ``trial and error'' choice
$n-$independent. Thus, he replaced Eq.~(\ref{lumapo}) by an overall
ansatz
 \be
 |\psi^{(SP)}(t)\pkt=\Omega_{(Dyson)}|\psi^{(Dyson)}(t)\kt
 \,,
 \ \ \ \ \ \ |\psi^{(Dyson)}(t)\kt \in {\cal H}^{(friendlier)}\,.
 \label{mapo}
 \ee
in which the map $\Omega_{(Dyson)}\neq \Omega_{(Dyson)}(t)$ from the
simpler Hilbert space remains stationary.

In the numerous applications of such an ansatz in nuclear physics
\cite{Geyer} the mapping was reinterpreted as a replacement of the
unfriendly initial fermionic Hilbert space ${\cal H}^{(textbook)}$
by a more suitable {\em bosonic} Hilbert space ${\cal
H}^{(friendlier)}$. In the majority of these applications the
mapping was kept stationary but, otherwise, flexible, non-unitary.
The stationarity assumption implied that the Schr\"{o}dinger
Eq.~(\ref{SET}) in ${\cal H}^{(textbook)}$ gets replaced by its
``interacting boson model'' avatar
 \be
 {\rm i} \frac{\partial}{\partial t} \,|\psi^{(Dyson)}(t)\kt
 =H_{(Dyson)}\,|\psi^{(Dyson)}(t)\kt\,,
 \ \ \ \ \ \
 |\psi^{(Dyson)}(t)\kt \in {\cal H}^{(friendlier)}\,.
 \label{SETdys}
 \ee
It lives in the friendlier space and contains a {\em simplified\,}
isospectral Hamiltonian,
 \be
 H_{(Dyson)}= \Omega_{(Dyson)}^{(-1)}
 \mathfrak{h}_{(SP)}\Omega_{(Dyson)}\,.
 \label{dag}
 \ee
Needless to add that the variational determination of spectra via
$H_{(Dyson)}$ proved superior in a number of realistic calculations
\cite{Geyer}.

\section{The Stone's theorem revisited\label{secwe}}

In a brief conceptual detour let us now address one of the apparent
paradoxes encountered after the use of non-unitary maps
(\ref{nontri}).

\subsection{The Dyson-proposed return to Hermiticity}

The replacement (\ref{dag}) of the known and realistic Hamiltonian
$\mathfrak{h}_{(SP)}$ defined in ${\cal H}^{(textbook)}$ by its
simpler isospectral partner $H_{(Dyson)}$ acting in ${\cal
H}^{(friendlier)}$ is accompanied by the loss of the physical status
of the Hilbert space caused by the non-unitarity of the map. As long
as the product (\ref{nontri}) called metric remains
time-independent, its emergence introduces a comparatively marginal
technical complication. It suffices to convert the auxiliary,
unphysical Hilbert space ${\cal H}^{(friendlier)}$ into its
unitarily non-equivalent partner, i.e., into another Hilbert space
${\cal H}^{(standard)}$. The latter space re-acquires the
physical-state status because it becomes, by construction, unitarily
equivalent to ${\cal H}^{(textbook)}$.

The conversion is easy. One merely keeps the linear space ${\cal V}$
of the ket-vectors $|\psi\kt$ unchanged and realizes the transition
to the desired ${\cal H}^{(standard)}$ by the mere {\it ad hoc\,}
amelioration of the bra vectors,
 \be
 \br \psi | \ \to \  \br \psi | \Theta_{(stationary)}
  \ \equiv \ \br \psi_\Theta |
 \ \ \ \ \ {\rm for} \ \ \ \ \
 {\cal H}^{(friendlier)} \to {\cal H}^{(standard)}
 \,.
 \label{amelio}
 \ee
The new bra-vectors are metric-dependent. In the terminology of
functional analysis we just changed the definition of the dual {\it
alias} bra-vector space of the linear functionals, ${\cal V}' \to
{\cal V}'_\Theta\,$. Thus, we replaced the unphysical Dirac's
bra-ket inner product $ \br \psi | \chi\kt $ by its physical
standard-space amendment,
 \be
 \br \psi
 |\chi \kt \to \br \psi|\Theta |\chi \kt \ \equiv \ \br \psi_\Theta
 |\chi \kt\,.
  \ee
Schematically, the situation is depicted in Fig.~\ref{picee3wwww}.

\begin{figure}[h]                     
\begin{center}                         
\epsfig{file=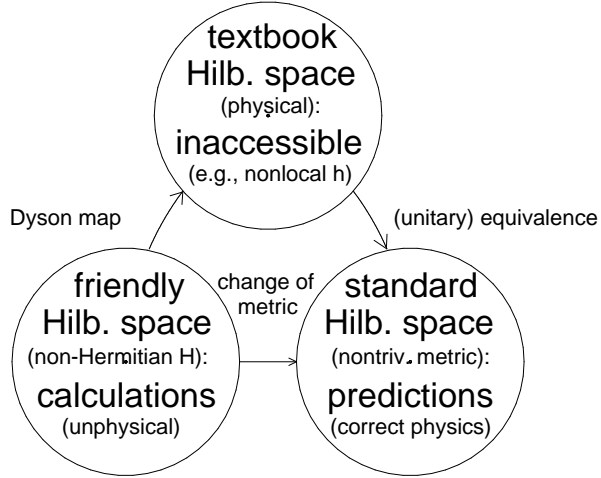,angle=270,width=0.7\textwidth}
\end{center}                         
\vspace{-2mm}\caption{Mutual relations of the three Hilbert spaces
needed in quantum theory using non-Hermitian operators (stationary
case).
 \label{picee3wwww}}
\end{figure}

In Fig.~\ref{picee3wwww} we emphasize the strict equivalence of the
phenomenological predictions which are, in principle, available in
both of the physical Hilbert spaces ${\cal H}^{(textbook)}$ and
${\cal H}^{(standard)}$. This explains why the theory does not
contradict the Stone's theorem \cite{Stone}. Indeed, this theorem
states that the unitarity of the evolution {\em requires} the
self-adjointness of the generator. Obviously, by construction, such
a condition is satisfied in {\em both} of the physical Hilbert
spaces, viz., in ${\cal H}^{(textbook)}$ (where we have
$\mathfrak{h}=\mathfrak{h}^\dagger$, as we must have) as well as in
${\cal H}^{(standard)}$ (where we only have to understand that the
self-adjointness is usually represented by formula
$H=H^\ddagger:=\Theta^{-1}H^\dagger\Theta$, i.e., by its
representation in auxiliary  ${\cal H}^{(friendlier)}$).

\subsection{The predictions of measurements revisited}

In practice, we are always performing all calculations in the
auxiliary space ${\cal H}^{(friendlier)}$. Whenever needed, we only
perform the easy transition to ${\cal H}^{(standard)}$ by means of
the introduction of the non-trivial {\it ad hoc} metric operator
$\Theta \neq I$. The unitarity of the evolution is equally
guaranteed in both of the representations of the quantum system in
question, therefore.

The results of measurements find their correct interpretation in
either one of the two, mutually unitarily equivalent physical
Hilbert spaces ${\cal H}^{(standard)}$ and ${\cal H}^{(textbook)}$.
The main advantage of the resulting triple-representation formalism
is that the majority of the constructive considerations may be, in
the auxiliary Hilbert space ${\cal H}^{(friendlier)}$, maximally
simplified. Only the definitions of the observables
$\mathfrak{q}_{(SP)}$ are usually deduced inside ${\cal
H}^{(textbook)}$. Thus, one has to use an analogue of
Eq.~(\ref{dag}) and one has to pull the operators up to the
auxiliary Hilbert space ${\cal H}^{(friendlier)}$. This yields their
non-Hermitian representations in ${\cal H}^{(friendlier)}$,
 \be
 Q_{(Dyson)}= \Omega_{(Dyson)}^{(-1)}
 \mathfrak{q}_{(SP)}\Omega_{(Dyson)} \neq Q_{(Dyson)}^\dagger\,.
 \label{dadag}
 \ee
In the case of the observable of position this transformation of
operators other than Hamiltonians was sampled in \cite{Batal}. In
all of the similar cases the experiment-predicting formula
(\ref{MEAS}) may be replaced by its --  friendlier -- alternative
since
 \be
 \pbr \psi^{(SP)}(t_f)|\mathfrak{q}_{(SP)}(t_f) |\psi^{(SP)}(t_f)
 \pkt=
 \br \psi^{(Dyson)}_\Theta(t_f)|Q_{(Dyson)}(t_f)
 |\psi^{(Dyson)}(t_f)
 \kt\,.
 \label{dysMEAS}
  \ee
The pragmatic appeal of Hamiltonians $H_{(Dyson)}$ and of the other
non-Hermitian observables $Q_{(Dyson)}$ results from an implicit
assumption that the use of the respective mappings (\ref{dag}) and
(\ref{dadag}) {\em simplifies} the evaluation of matrix
elements~(\ref{dysMEAS}). Otherwise, the use of the non-unitary map
of Eq.~(\ref{mapo}) and of the non-Hermitian version of
Schr\"{o}dinger equation would not be sufficiently well motivated.

\subsection{\label{ytseci2}Best known special case: Stationary
${\cal PT}-$symmetric systems}

In applications, the stationary maps $\Omega_{(Dyson)}$ are sought
in a trial-and-error manner \cite{Dyson,Geyer}. Bender with
Boettcher \cite{BB} innovated the strategy and proposed an inversion
of the Dyson's flowchart. This inspired a number of innovative
studies of Eq.~(\ref{SETdys}). All of them were based on an {\em
input\,} choice of a tentative stationary candidate $H_{(Dyson)}\neq
H_{(Dyson)}^\dagger$ for the Hamiltonian. These toy-model operators
were mostly chosen Krein-space self-adjoint {\it alias} ${\cal
PT}-$symmetric,
 \be
 H_{(Dyson)} {\cal PT}= {\cal PT} H_{(Dyson)}\,,
 \label{rel}
 \ee
with ${\cal PT}$ meaning parity-time-reflection (see an extensive
summary of these developments in deeply physics-oriented review
paper \cite{Carl}).

In the resulting narrower, ${\cal PT}-$symmetric quantum mechanics
the main task lies again in the ultimate evaluation of experimental
predictions (\ref{dysMEAS}). For this purpose people usually
reconstruct the observables-Hermitizing Hilbert-space metric
$\Theta=\Theta_{(stationary)}$ from the Hamiltonian, i.e., as a
solution of the time-independent relation
 \be
  H^\dagger \Theta=
  \Theta^{}H_{}
 \,,
 \ \ \ \
 \Theta=\Omega^\dagger\Omega\,
 \label{daobs}
 \ee
which follows from the self-adjointness constraint
$\mathfrak{h}=\mathfrak{h}^\dagger$ and from definition (\ref{dag}).

People often introduce also another observable $Q$. The task of
making the theory fully consistent is then more complicated
\cite{arabky}. In the light of Eq.~(\ref{dadag}) the same metric
must satisfy also the second equation
 \be
 Q^\dagger_{(Dyson)} \Theta
 =\Theta Q_{(Dyson)}\,.
 \label{die}
 \ee
It is worth adding that physicists often need the factor map
$\Omega_{(Dyson)}$ as well. This is another ambiguous task in
general. For the sake of simplicity this ambiguity is usually
ignored and the special, unique self-adjoint
$\Omega_{(special)}=\sqrt{\Theta}$ is used \cite{ali}.

\section{\label{seci2}
Non-stationary quantum systems}


In a way explained in the influential 2007 letter \cite{PLB} the
possibilities of an innovative, time-dependent choice of the
Hilbert-space mapping in ansatz (\ref{mapo}) are rather restricted.
The reasons may be found discussed in review paper \cite{ali} where
it has been proved that in an extended, non-stationary theory with
time-dependent metric  one has the choice between the loss of the
unitarity of the evolution and the loss of the observability of the
Hamiltonian operator. In our more or less immediate reaction
\cite{which} to the latter no-go theorem we pointed out that the
puzzle is purely terminological. The observability of the generator
of the evolution of the state-ket-vectors in Schr\"{o}dinger
Eq.~(\ref{SETdys}) is not necessary~\cite{Heisenberg,FringMou}. In
the non-stationary non-Hermitian cases it is possible to admit that
the generator $G(t)$  {\em differs} from the observable Hamiltonian
$H_{(Dyson)}(t)$ of Eq.~(\ref{dag}) by a new, Coriolis-force
component $\Sigma(t)$ of a purely kinematical origin (to be defined
by Eq.~(\ref{defsig}) in the next paragraph). Thus, one can speak
about a non-Hermitian interaction picture (NIP).


\subsection{Schr\"{o}dinger equation for ket
vectors\label{neramo}}

After one accepts the non-stationary quantum-evolution scenario, the
theory becomes complicated. Multiple new open questions emerge. They
concern not only the NIP theory itself (e.g., the perception of the
adiabatic approximation \cite{Bila,Stefan} or of the so called
geometric phases \cite{Wang,Maamache}) but also its practical
applications, say, in the analysis of the observability-breaking
processes \cite{Luiz,BBufab} or in an efficient elimination of the
time dependence from Schr\"{o}dinger equations \cite{IJTP,FrFrith}.

In a brief reminder of the time-dependent version of quantum theory
in its generalized, non-Hermitian and non-stationary
triple-Hilbert-space formulation of~Refs.~\cite{timedep,SIGMA} let
us abbreviate or drop the subscripts and superscripts. The
presentation of the theory may then start from a non-stationary
version of ansatz~(\ref{mapo}),
 \be
 |\psi^{}(t)\pkt=\Omega(t)|\psi^{}(t)\kt
 \in {\cal H}^{(T)}\,,
 \ \ \ \ \ \ |\psi^{}(t)\kt \in {\cal H}^{(F)}\,.
 \label{mapoge}
 \ee
The insertion in the original Schr\"{o}dinger Eq.~(\ref{SET}) does
not lead to Eq.~(\ref{SETdys}) but to its non-stationary update
 \be
 {\rm i} \frac{\partial}{\partial t}
 \,|\psi^{}(t)\kt=G_{}(t)\,|\psi^{}(t)\kt\,
  \label{SEFip}
 \ee
to be solved in ${\cal H}^{(F)}$, with the ``unobservable
Hamiltonian''
 \be
 G(t) = H(t) - \Sigma(t)
 \,
  \label{2.3}
 \ee
containing the ``observable Hamiltonian''
 \be
 H_{}(t)= \Omega_{}^{(-1)}(t)
 \mathfrak{h}_{(SP)}(t)\Omega_{}(t)
 \label{udagobs}
 \ee
and the ``Coriolis-force Hamiltonian''
 \be
 \Sigma(t)={\rm i} \Omega^{-1}(t)\dot{\Omega}(t)\,,
 \ \ \ \ \
 \dot{\Omega}(t)=
 \frac{d}{dt} \,
 \Omega(t)\,.
 \label{defsig}
 \ee
Operator $H(t)$ remains defined by Eq.~(\ref{dag}). It keeps the
quasi-Hermitian form,
 \be
  H^\dagger(t) \Theta(t)=
  \Theta^{}(t)H_{}(t)
 \,,
 \ \ \ \
 \Theta(t)=\Omega^\dagger(t)\Omega(t)\,,
 \label{dagobs}
 \ee
i.e., the status of an observable, viz., of an instantaneous energy.

\subsection{Schr\"{o}dinger equation for bra vectors\label{paramo}}

Using the same notation convention as above and abbreviating
$\Theta(t)|\psi(t)\kt \ \equiv \ |\psi_\Theta(t)\kt$ we replace the
ansatz of Eq.~(\ref{mapoge}) by its dual-space alternative
 \be
 |\psi^{}(t)\pkt
  =\left [\Omega^\dagger(t)\right ]^{-1}|\psi_\Theta(t)\kt
 \in {\cal H}^{(T)}\,,
 \ \ \ \ \ \ |\psi_\Theta(t)\kt \ \equiv \ \Theta(t)|\psi(t)\kt
  \in {\cal H}^{(F)}\,.
 \label{dumapo}
 \ee
This leads to the complementary Schr\"{o}dinger equation in ${\cal
H}^{(F)}$,
 \be
 {\rm i} \frac{\partial}{\partial t} \,|\psi_\Theta(t)\kt
 =G^\dagger(t)\,|\psi_\Theta(t)\kt\,
  \label{d2.3}
 \ee
(see \cite{timedep}; a misprint is to be removed in \cite{SIGMA}).

Whenever we decide to study the evolution of pure states, our pair
of Schr\"{o}dinger equations must be complemented by the choice of
initial values of $|\psi(t)\kt$ and $|\psi_\Theta(t)\kt$, say, at
$t=t_i=0$. This choice has the well known physical meaning
reflecting the experimental preparation of the quantum system in
question. In the non-Hermitian dyadic notation the pure states of
system ${\cal S}$ may and should be treated as represented by the
elementary projectors
 \be
 \pi_{\psi,\Theta}(t)=|\psi(t)\kt \,\frac{1}{\br
 \psi_\Theta(t)|\psi(t)\kt}\,\br \psi_\Theta(t)|\,.
 \ee
From here, one could also very quickly move to the non-Hermitian
statistical quantum mechanics where one prepares and works with the
statistical mixtures of states characterized, conveniently, by the
non-Hermitian density matrices of the form
 \be
 \widehat{\varrho}(t)=\sum_{k}|\psi^{(k)}(t)\kt
 \,\frac{p_k}{\br \psi^{(k)}_\Theta(t)|\psi^{(k)}(t)\kt}\,\br
 \psi^{(k)}_\Theta(t)|
 \,,
 \ \ \ \ \ \
 \sum_{k} p_k=1\,.
 \ee
For the time-independent preparation probabilities $p_k\neq p_k(t)$
one would then simply get, as an immediate consequence of
Eqs.~(\ref{SEFip}) and (\ref{d2.3}), the evolution equation
 \be
 {\rm
 i\,}\partial_t\, \widehat{\varrho}(t)= G(t)\widehat{\varrho}(t)
 -\widehat{\varrho}(t)\,G(t)\,,
 \ee
i.e., the non-Hermitian version of the Liouvillean evolution
picture.

\subsection{Heisenberg equations for
observables\label{koramo}}

Operator $Q_{}(t)$ of any observable must be compatible with the
requirement
 \be
 Q_{}^\dagger(t)\Theta(t)=\Theta(t)Q(t)
 \label{nenene}
 \ee
which guarantees its NIP observability inherited form the
Hermiticity of $\mathfrak{q}_{}(t) $. Formula (\ref{nenene})
represents just a transfer of the self-adjointness property $
\mathfrak{q}_{}(t) =\mathfrak{q}^{\dagger}(t)$ from ${\cal H}^{(T)}$
via the time-dependent generalization of the stationary definition
by Eq.~(\ref{dadag}),
 \be
 Q_{}(t)= \Omega_{}^{(-1)}(t)
 \mathfrak{q}_{(SP)}(t)\Omega_{}(t)\,.
 \label{redadag}
 \ee
The differentiation of Eq.~(\ref{redadag}) with respect to time
(denoted by overdot) now yields the Heisenberg-type evolution
equation
 \be
 {\rm i\,}\frac{\partial}{\partial t} \,{Q}_{}(t)=
  Q(t)\Sigma(t) -\Sigma_{}(t)Q(t)
 +K(t)\,,
 \ \ \ \ \ K(t)=\Omega_{}^{(-1)}(t)
 {\rm i\,}\dot{\mathfrak{q}}_{(SP)}(t)\Omega_{}(t)\,.
 \label{beda}
 \ee
We have two possibilities. In the general, non-stationary scenario
characterized by the nonvanishing time-derivative
$\dot{\mathfrak{q}}_{(SP)}(t)$ the evaluation of $K(t)$ (i.e., of
the necessary {\em input\,} information) would require the {\em
full\,} knowledge of the nonstationary Dyson maps $\Omega_{}(t)$.
Naturally, Heisenberg Eq.~(\ref{beda}) itself would be then
redundant because also the very operators $Q(t)$ would be
obtainable, by the same mapping, from their partners
${\mathfrak{q}}_{(SP)}(t)$ (which, by themselves, could be evaluated
from their derivatives, by ordinary integration).

In such a scenario one would not have any reason for leaving the
$T-$space. Let us, therefore, restrict our attention to the more
common, stationary models in which the partial derivatives
$\dot{\mathfrak{q}}_{(SP)}(t)$ vanish. This will make the recipe of
Eq.~(\ref{beda}) sensible because the related operator $K(t)$ would
be vanishing as well. Alternatively, we may also admit the models in
which a non-vanishing operator $K(t)$ would be prescribed in
advance. In both of these situations the solution of Heisenberg
Eq.~(\ref{beda}) would make sense.


\section{The physics of unitary evolution \label{secweb}}

\subsection{Broader context: Haag's theorem}

Schematically, the results of our preceding considerations may be
interpreted as certain preparatory steps towards a general
non-Hermitian interaction-picture recipe. In it one may relax all of
the {\it ad hoc} simplification assumptions, having to work just
with the three separate, independent unitary-evolution generators
$G(t)$ (for kets), $G^\dagger(t)$ (for their duals in ${\cal
H}^{(standard)}$) and $\Sigma(t)$ (Coriolis, for the operators of
observables).

\begin{figure}[h]                     
\begin{center}                         
\epsfig{file=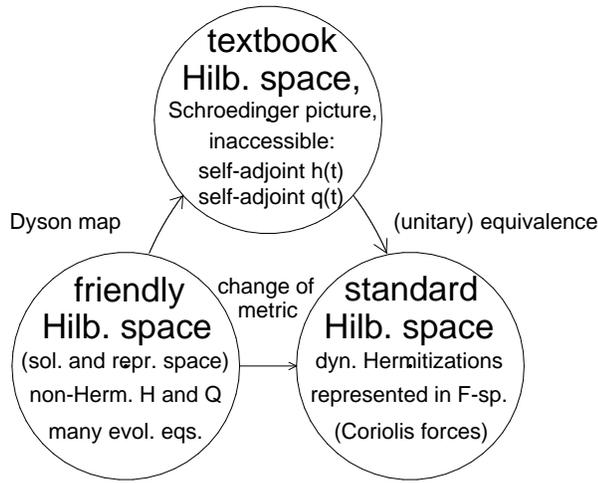,angle=270,width=0.7\textwidth}
\end{center}                         
\vspace{-2mm}\caption{The three Hilbert spaces and their respective
roles in the general non-stationary case.
 \label{picee3}}
\end{figure}

The situation reflecting the general non-stationary quantum dynamics
is summarized in Fig.~\ref{picee3}. We may find its conventional,
single-Hilbert-space quantum theory predecessor in the textbooks.
Indeed, once we choose $G(t)=H(t)$ (i.e., once we have $\Sigma(t)=0$
and the trivially evolving operators) we may speak about the
non-Hermitian version of Schr\"{o}dinger picture. Similarly, the
alternative special choice of $G(t)=0$ (yielding $\Sigma(t)=H(t)$
and the trivially evolving bras and kets) may be perceived as
determining the non-Hermitian version of the Heisenberg picture (cf.
Ref.~\cite{Heisenberg}).

Obviously, the general formalism finds its textbook analogue in the
so called interaction picture. A brief comment may prove useful in
this context. It is well known that the applicability of the
interaction picture is severely restricted,  in multiple quantum
field theories, by the famous Haag's theorem \cite{Haag}. The
theorem states, in essence, that due to the complicated mathematical
nature of these field theories, the above-mentioned one-to-one
correspondence between the two physical Hilbert spaces of
Fig.~\ref{picee3} may cease to exist, making the approach deeply
mathematically inconsistent. In our present language this would mean
that the Hilbert spaces ${\cal H}^{(textbook)}$ and ${\cal
H}^{(standard)}$ would happen to be unitarily inequivalent.
Fortunately, it is usually forgotten that the no-go nature of the
Haag's theorem does not extend to the ordinary quantum mechanics,
i.e., to the theory which is considered in our present paper.

\subsection{Initial values}

The solution of the triplet of the NIP evolution Eqs.~(\ref{SEFip}),
(\ref{d2.3}) and (\ref{beda}) can only lead to the meaningful
phenomenological predictions under the correct specification of the
initial values. Thus, the two kets $|\psi(t_i)\kt$ and
$|\psi_\Theta(t_i)\kt$ and the initial value of the observable of
interest (say, $Q(t_i)$) must be given in advance. The problem of
the specification of the initial operator value $Q(t_i)$ is subtle
because we must require that this operator also satisfies the
physical observability constraint~(\ref{nenene}) at the initial time
$t=t_i=0$.

From the point of view of mathematics the choice of the initial
values is arbitrary. In contrast, this choice carries the most
relevant information about physics. Thus, the non-Hermitian nature
of the NIP representation makes the guarantee of the consistency of
the initial values most important (cf., e.g., the numerical
experiments with the changes of initial values in \cite{Bila}).

\subsection{The measured quantities\label{seci3c}}

The most straightforward use of the NIP recipe would be based on the
assumption that we know both the generators $G(t)$ and $\Sigma(t)$.
The two Schr\"{o}dinger-type equations may be then solved to define
the two state vectors $|\psi(t)\kt$ and $|\psi_\Theta(t)\kt$ at all
times $t>t_i=0$. Similarly, the knowledge of $\Sigma(t)$ enables us
to recall Heisenberg Eq.~(\ref{beda}) and to reconstruct operators
$Q(t)$ up to the time of measurement $t=t_f>t_i=0$. In this manner
we can evaluate, in principle at least, the ultimate experimental
prediction
 \be
 \br \psi^{}_\Theta(t_f)|Q_{}(t_f) |\psi^{}(t_f) \kt\,.
 \label{redysMEAS}
  \ee
In contrast to the stationary case of Eq.~(\ref{dysMEAS}) we may
notice that the explicit knowledge of the {\em operator\,} of metric
becomes, within such a triple-equation recipe, redundant. Its role
is played by the {\em vector\,} $|\psi^{}_\Theta(t_f)\kt$. The
apparent paradox is easily explained via the trivial mathematical
identity
 \be
 {\rm i}\,\frac{\partial}{\partial t} \,
 \Theta(t)= \Theta(t)\Sigma(t) - \Sigma^\dagger(t) \Theta(t)\,.
 \label{identity}
 \ee
This shows that the knowledge of the metric is formally equivalent
to the knowledge of the Coriolis-force generator $\Sigma(t)$. It is
worth adding that due to Eq.~(\ref{dagobs}), the latter mathematical
identity is equivalent to its alternative version
 \be
 {\rm i}\,\frac{\partial}{\partial t} \,
 \Theta(t)= G^\dagger(t) \Theta(t)-\Theta(t)G(t) \,.
 \label{2.6}
 \ee
By several authors \cite{Wang,FringMou,Bila}, the latter
differential equation for operators was picked up as a tool of the
construction of the metric. Naturally, such a highly non-economical
recipe proved sufficiently efficient for the purposes of the study
of certain next-to-trivial illustrative non-Hermitian non-stationary
two-by-two matrix models in {\it loc. cit}.

\section{Special cases: Pre-selected generators $G(t)$
\label{seci3b}}

Quantum models using {\em any\,} form of the ``kinematical'' input
operator $G(t)$ will combine an enhanced flexibility with technical
complications. At least {\em some} of its applications may prove
{\em both\,} interesting and feasible. For example, in the light of
Theorem 2 of review paper \cite{ali} and of the related comments in
\cite{IJTP,FringMou}, the non-Hermitian interaction picture seems to
be {\em the only\,} formulation of quantum mechanics of unitary
systems in which the {\em non-Hermitian} versions of Schr\"{o}dinger
equations would  be allowed to contain  {\em time-dependent}, albeit
non-observable, Hamiltonian-like generators $G(t)$ and
$G^\dagger(t)$.

\subsection{Non-Hermitian Heisenberg picture ($G=0$)}

In Heisenberg picture (HP, cf.~Ref.~\cite{Heisenberg})  the
state-vectors are, by definition, stationary. Their generator is
trivial, $G_{(HP)}(t)\, \equiv \,0$. Due to Eq.~(\ref{2.6}), the HP
metric operator {\em cannot\,} be non-stationary, therefore. {\it
Vice versa,\,} {\bf any\,} description of dynamics during which the
metric would be changing requires a replacement of the
oversimplified HP formalism by its suitable (i.e., perhaps,
perturbatively tractable) NIP generalization.


In the non-Hermitian Heisenberg picture of Ref.~\cite{Heisenberg}
with vanishing $G=0$ both the bra and ket state vectors are
constant. This indicates that the isolated choice of any other
generator $G(t)$ has a more or less purely kinematical character and
need not carry {\em any\,} information about the dynamics.
Basically, the ``missing'' information about the dynamics will enter
the picture via the exhaustive description of the quantum system
${\cal S}$ at the initial instant $t=t_i$.

In the special non-Hermitian HP case the problem is slightly
simplified. From Eq.~(\ref{2.3}) we may deduce that
 $$\Sigma_{(HP)}(t)=H(t)$$
i.e., that the second generator becomes observable. Thus, we are
free to accept {\em any\,} form of such an observable Hamiltonian
$H(t)$ and/or {\em any\,} form of the other observable $Q(t)$ as an
independent, {\em additional\,} input information about the dynamics
of the quantum system in question.

In the non-Hermitian Heisenberg picture both of the NIP
Schr\"{o}dinger equations drop out and only the solution of the
Heisenberg Eqs.~(\ref{beda}) is asked for. Possibly, differential
Eq.~(\ref{defsig}) {defines} mapping $\Omega=\Omega(t)$ from its
pre-selected initial value at $t=t_i=0$. Such a constructive version
of the non-Hermitian Heisenberg picture already has its physical
applications \cite{cinani}. In the near future the non-Hermitian HP
and NIP formalisms might also prove needed and applied in quantum
cosmology \cite{ali,BBufab}.

\subsection{Extended non-Hermitian Heisenberg picture ($G \neq
G(t)$)}


The most straightforward generalization of the non-Hermitian HP
recipe may be based on the use of a {\em constant} nontrivial
generator of the evolution of wave functions. Such an ``extended
Heisenberg picture'' (EHP) with $G_{(EHP)}(t)=G_{(EHP)}(0)\neq 0$
has certain unexpected mathematical simplicity features (cf.~Eq.
Nr.~21 in~\cite{IJTP}). It proved also particularly suitable, in the
words of the Abstract of Ref.~\cite{IJTP}, for the description of
the ``evolution of the manifestly time-dependent self-adjoint
quantum Hamiltonians $\mathfrak{h}(t)$''.

Recently, both of these EHP applicability features were rediscovered
and illustrated, via Rabi-type Hamiltonian $\mathfrak{h}_{(SP)}(t)$,
in rapid communication~\cite{FrFrith}. The observation also renewed
the interest in the the systems (first discussed by B\'{\i}la
\cite{Bila}) for which the operator $G(t)$ is given in advance. In
the present setting, the key technical advantage of these quantum
systems is represented by the above-mentioned replacement of the
construction of the metric via Eqs.~(\ref{identity}) or (\ref{2.6})
by the mere solution of Schr\"{o}dinger Eq.~(\ref{d2.3}).

\section{General non-Hermitian
interaction picture\label{cwe}}

\subsection{Reconstruction of the basis}

One of the immediate consequences of the hypothetical knowledge of
the non-observable Hamiltonian $G(t)$ is that any change of the
initial ket-vector may be reflected by a new solution of
Schr\"{o}dinger Eq.~(\ref{SEFip}) and, {\it mutatis mutandis}, any
change of the initial bra-vector may be used as an initial value for
the parallel repeated solution of our second Schr\"{o}dinger
Eq.~(\ref{d2.3}). Once we decide to repeat the process $N-$times,
the knowledge of initial $N-$plets
 \be
  |\psi_1(0)\kt\,,|\psi_2(0)\kt\,,\ldots\,,|\psi_N(0)\kt\,
  \label{inijedna}
 \ee
and
 \be
  |\psi_{1,\Theta}(0)\kt\,,
  |\psi_{2,\Theta}(0)\kt\,,\ldots\,,|\psi_{N,\Theta}(0)\kt\,
  \label{inidva}
 \ee
will generate, in principle as well as in practice,
the related respective $N-$plets of state vectors
 \ben
  |\psi_1(t)\kt\,,|\psi_2(t)\kt\,,\ldots\,,|\psi_N(t)\kt\,
 \een
and
 \ben
  |\psi_{1,\Theta}(t)\kt\,,|\psi_{2,\Theta}(t)\kt\,,\ldots\,,
  |\psi_{N,\Theta}(t)\kt\,
 \een
at all times $t$.

One should now contemplate an arbitrary given physical quantum
system ${\cal S}$ which is prepared (i.e., known) at some initial
time (say, $t=0$) and which is to be analyzed in its NIP
representation, i.e., using a pre-selected operator $G(t)$. The
preparation of this system in {several (i.e., $N$) independent} pure
states is then equivalent to our knowledge of the respective initial
pairs of vectors $ |\psi_j(t)\kt$ and $|\psi_{j,\Theta}(t)\kt$. In
an extreme case we may assume that the initial $N-$plet  (with $N
\leq \infty$ in general) is, at $t=0$, bi-orthonormal,
 \be
  \br \psi_{m,\Theta}(0) |\psi_{n}(0) \kt = \delta_{m,n}\,,\
  \ m,n = 1, 2, \ldots, N\,
 \label{complean0}
 \ee
and {complete},
 \be
 \sum_{n=1}^N |\psi_{n}(0)\kt \br \psi_{n,\Theta}(0) | = I\,.
 \label{comple0}
 \ee
The assumption of our knowledge of $G(t)$ at {\em all} $t>0$ then
enables us to construct the pairs $ |\psi_j(t)\kt$ and
$|\psi_{j,\Theta}(t)\kt$ such that
 \be
 \sum_{n=1}^N |\psi_{n}(t)\kt \br \psi_{n,\Theta}(t) | = I\,,
 \ \ \ \ \ \
 \br \psi_{1,\Theta}(t) |\psi_{n}(t) \kt = \delta_{m,n}\,,\
  \ m,n = 1, 2, \ldots, N\,.
 \label{comple}
 \ee
Their respective completeness and bi-orthonormality {\em survive} at
all times.

\subsection{Reconstruction of the metric\label{emxpiy}}

The hypothetical NIP representability of a given quantum system
${\cal S}$ must reflect the existence of its (by assumption,
overcomplicated and technically inaccessible but existing)
Hermitian, textbook representation using the self-adjoint operators
of observables in ${\cal H}^{(T)}$. The system ${\cal S}$ must have
its non-Hermitian SP representation at $t=0$. The $N-$plets
(\ref{inijedna}) and (\ref{inidva}) of pure states may be expected
to be prepared, at $t=0$, as the respective left and right
eigenstates of a relevant observable. Typically, we work with a
Hamiltonian and obtain its spectral representation, therefore,
 \be
 H(t) = \sum_{n=1}^N\,|\psi_{n}(t)\kt E_n(t) \br \psi_{n,\Theta}(t)
 |\,.
 \label{lefdi}
 \ee
Such an instantaneous energy operator is, by definition, related to
its isospectral Hamiltonian partner $\mathfrak{h}$ which is
self-adjoint in ${\cal H}^{(T)}$. The same parallelism will also
apply to any other observable, i.e., in our notation, to any
operator $Q(t)$.

In the light of paragraph \ref{koramo} we will consider the
``tractable'' scenarios in which the SP energies themselves are
conserved, i.e., $ E_n(t)= E_n(0)=E_n$. The key benefit of this
assumption is that we may now recall Ref.~\cite{SIGMAdva} and get an
important mathematical result, viz., the solution of the
time-dependent quasi-Hermiticity relation (\ref{2.6}) at all times.

\begin{thm}
 \label{lemoine}
For a given generator $G(t)$ and for the two initial vector sets
(\ref{inijedna}) and (\ref{inidva}) with properties
(\ref{complean0}) and (\ref{comple0}), the metric operator
$\Theta(t)$ has the following formal representation in ${\cal
H}^{(F)}$,
 \be
 \Theta(t)=\sum_{n=1}^N
 |\psi_{n,\Theta}(t)\kt \,
  \br \psi_{n,\Theta}(t)
 | \,.
 \label{defifi}
 \ee
\end{thm}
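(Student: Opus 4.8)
The plan is to check that the operator family given by the right-hand side of (\ref{defifi}) solves the \emph{same} linear evolution equation (\ref{2.6}) as the genuine metric $\Theta(t)$, and coincides with it at $t=0$; uniqueness of solutions of (\ref{2.6}) then forces the two families to agree for all $t$. Accordingly I would first name the candidate,
\ben
\widetilde{\Theta}(t):=\sum_{n=1}^N |\psi_{n,\Theta}(t)\kt\,\br \psi_{n,\Theta}(t)|\,,
\een
where the vectors $|\psi_{n,\Theta}(t)\kt$ are those produced from the initial data (\ref{inidva}) by the dual Schr\"odinger equation (\ref{d2.3}).

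The initial-value match is immediate. Recall that $\Theta(0)$ is the unique operator sending $|\psi_n(0)\kt\mapsto|\psi_{n,\Theta}(0)\kt$ (this is the very meaning of the symbol $|\psi_{n,\Theta}\kt$), while the kets $|\psi_n(0)\kt$ resolve the identity by (\ref{comple0}); hence
\ben
\Theta(0)=\Theta(0)\,I=\sum_{n=1}^N \Theta(0)|\psi_n(0)\kt\,\br \psi_{n,\Theta}(0)|=\sum_{n=1}^N |\psi_{n,\Theta}(0)\kt\,\br \psi_{n,\Theta}(0)|=\widetilde{\Theta}(0)\,.
\een
For the evolution I would differentiate $\widetilde{\Theta}(t)$ term by term. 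From (\ref{d2.3}), ${\rm i}\,\partial_t|\psi_{n,\Theta}(t)\kt=G^\dagger(t)|\psi_{n,\Theta}(t)\kt$, and its adjoint is ${\rm i}\,\partial_t\br \psi_{n,\Theta}(t)|=-\br \psi_{n,\Theta}(t)|G(t)$; substituting these,
\ben
{\rm i}\,\partial_t\widetilde{\Theta}(t)=\sum_{n=1}^N\left(G^\dagger(t)|\psi_{n,\Theta}\kt\br \psi_{n,\Theta}|-|\psi_{n,\Theta}\kt\br \psi_{n,\Theta}|G(t)\right)=G^\dagger(t)\widetilde{\Theta}(t)-\widetilde{\Theta}(t)\,G(t)\,,
\een
which is exactly (\ref{2.6}). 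As $\Theta(t)$ is already known, from (\ref{dagobs}) together with the identity (\ref{identity}), to be a solution of this same linear operator equation with the initial value just computed, the standard uniqueness theorem for linear initial-value problems yields $\Theta(t)=\widetilde{\Theta}(t)$ for all $t$, i.e.\ the claimed formula (\ref{defifi}). An equivalent and perhaps cleaner route is to note that both $\Theta(t)|\psi_n(t)\kt$ and $|\psi_{n,\Theta}(t)\kt$ satisfy ${\rm i}\,\partial_t(\cdot)=G^\dagger(t)\,(\cdot)$ --- the former by combining (\ref{SEFip}) with (\ref{2.6}), the latter directly from (\ref{d2.3}) --- with the same data at $t=0$, so they coincide for all $t$, after which contracting with the time-propagated completeness relation (\ref{comple}) delivers (\ref{defifi}) at once; in this variant the biorthonormality (\ref{complean0}) enters only through the survival of (\ref{comple}).

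The single place where more than algebra is needed --- and the reason the statement advertises only a \emph{formal} representation --- is the case $N=\infty$. There one has to justify the termwise differentiation of the infinite series defining $\widetilde{\Theta}(t)$, establish its convergence in a suitable operator topology, and invoke the uniqueness theorem on the (in general only dense) natural domains of the operators involved; this is precisely the functional-analytic bookkeeping that I would take over wholesale from Ref.~\cite{SIGMAdva}. For finite $N$ every step above is elementary, and in all cases the algebraic heart of the theorem rests on nothing more than Eqs.~(\ref{d2.3}), (\ref{2.6}) and the completeness relation (\ref{comple0}).
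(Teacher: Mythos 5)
Your proof is correct, but it takes a genuinely different route from the paper's. The paper's own argument is a one-line algebraic identity: insert the time-propagated completeness relation (\ref{comple}) to the right of $\Theta(t)$ and use the definition $|\psi_{n,\Theta}(t)\kt \equiv \Theta(t)|\psi_n(t)\kt$ of the physical bra vectors, so that $\Theta(t)=\sum_n \Theta(t)|\psi_{n}(t)\kt\,\br\psi_{n,\Theta}(t)| = \sum_n |\psi_{n,\Theta}(t)\kt\,\br\psi_{n,\Theta}(t)|$ instantaneously, with no differential equation invoked (the details being delegated to Ref.~\cite{SIGMAdva}). You instead verify that the candidate operator satisfies the operator evolution equation (\ref{2.6}) with the correct initial value and appeal to uniqueness of solutions of a linear initial-value problem; your ``cleaner route'' is closer to the paper's, but it adds the explicit check that the bras obtained by propagating the initial data (\ref{inidva}) under (\ref{d2.3}) coincide with $\Theta(t)|\psi_n(t)\kt$ at all later times. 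That extra check is actually a virtue: in the theorem's setting the data are $G(t)$ and the initial vectors, the $|\psi_{n,\Theta}(t)\kt$ are \emph{defined} as solutions of (\ref{d2.3}), and $\Theta(t)$ is the object being reconstructed, so the identification that the paper treats as purely definitional does deserve the dynamical justification you supply. The price is that your argument needs uniqueness for linear evolution equations and termwise differentiation of the series --- trivial for finite $N$, and for $N=\infty$ requiring exactly the functional-analytic care that the paper itself flags by calling the representation ``formal,'' so your closing caveat is aligned with the paper's.
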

 \begin{proof}
Along the lines outlined in Ref.~\cite{SIGMAdva}, formula
(\ref{defifi}) follows from completeness (\ref{comple}) and from the
definition of the physical bra vectors $ \br \psi_{n,\Theta}(t)| $
in ${\cal H}^{(S)}$ (the superscript abbreviates the ``standard''
physical Hilbert space, unitarily equivalent to ${\cal H}^{(T)}$).
The representation is formal because in some pathological cases with
$N = \infty$ its right-hand-side series need not converge. A less
formal extension of validity of the theory to these cases would
require a more rigorous mathematical specification of the properties
of the biorthonormal bases \cite{fabio}.
  \end{proof}

We see that via the repeated solution of Schr\"{o}dinger equations
one can obtain the spectral-like formula for the metric ``for
free''. Formula (\ref{defifi}) indicates that the initial choice of
the biorthonormal basis determines the unique initial-value choice
of the metric $\Theta(t_i)$ with $t_i=0$ as its byproduct. {\it Vice
versa}, by definition, the initial choice of $\Theta(0)$ would
enable us to generate the set of initial bras (\ref{inidva}) from a
pre-selected set (\ref{inijedna}) of initial kets. The condition of
the mutual compatibility (\ref{dagobs}) of $H(0)$ with $\Theta(0)$
of Eqs.~(\ref{lefdi}) and (\ref{defifi}) follows from properties
(\ref{comple}). The important remaining freedom is that the energies
$E_n$ in the standard spectral representation (\ref{lefdi}) of the
observable Hamiltonian are free parameters of the theory. Their
numerical values remain unrestricted in the NIP theoretical
framework, therefore.

\subsection{Reconstruction of the Dyson mappings $\Omega(t)$}

The operator of metric (\ref{defifi}) must be self-adjoint and
strictly positive in ${\cal H}^{(F)}$ so that it can be
diagonalized, under suitable mathematical conditions, via a unitary
operator ${\cal U}(t)$,
 \be
 \Theta(t)={\cal U}^\dagger(t)\theta^2(t) {\cal U}(t)\,.
 \ee
The matrix $\theta^2(t)$ of the real and positive eigenvalues of
$\Theta(t)$ and its square root $\theta(t)$ are both diagonal, real,
finite and non-vanishing. The diagonalization of the metric is
purely numerical. In practice we either work with finite-dimensional
Hilbert spaces, or we truncate them to a finite dimension
$N<\infty$.

Up to another, independent unitary-matrix ambiguity ${\cal V}(t)$,
we may factorize the metric into the product of the entirely general
class of non-stationary Dyson's mappings,
 \be
  \Omega(t)
 ={\cal V}^\dagger(t)\theta(t) {\cal U}(t)
 \,.
 \label{ome}
 \ee
At this moment we are already able to construct the Coriolis force
$\Sigma(t)$ so that we may construct a new, tilded-operator sum
 $
 \widetilde{H}(t)=G(t)+\Sigma(t)
 $.
It is instructive to verify that the latter operator is equal to the
old untilded unobservable Hamiltonian of Eq.~(\ref{lefdi}). One
reveals that the demonstration of the identity
$\widetilde{H}(t)={H}(t)$ is easy due to the validity of
Theorem~\ref{lemoine}.

\section{Application in relativistic quantum mechanics
\label{seci3}}

The authors of the textbooks on quantum theory are very well aware
that a consequent incorporation of all of the effects of the
relativistic kinematics requires a transition from quantum mechanics
to the quantum field theories in which the number of particles is
not conserved. The relativistic versions of the models in quantum
mechanics are considered as certain sufficiently satisfactory
approximations of the physical reality. The choice of the Dirac
equation describing fermions is preferred. The reasons range from
the applicability of Pauli principle (explaining antifermions as
holes in the Dirac's sea) up to an agreement of the theory with
experiments, say, for hydrogen atom \cite{Fluegge}.

\subsection{Stationary Klein-Gordon equation}


Within the framework of the relativistic quantum mechanics the
widespread preference of the study of the Dirac equation is,
certainly, a paradox because its bosonic Klein-Gordon (KG)
alternative is less mathematically complicated. In one of its
simplest versions written in units $\hbar=c=1$,
 \be
 \left (
 \frac{\partial^2}{\partial t^2}+D\right
 )\,\psi^{(KG)}(\vec{x},t)=0\,,
 \ \ \ \ \ \ D=-\triangle +m^2
  \label{kgha}
 \ee
one does not consider any external electromagnetic field so that the
kinetic energy is represented by the elementary Laplacean
$\triangle$. The dynamics is also reduced to a scalar external field
simulated by a suitable position-dependence of the mass term,
$m^2=m^2(\vec{x})$.

The most common physical interpretation of Eq.~(\ref{kgha}) was
proposed by Feshbach and Villars (FV, \cite{FV}). They changed the
variables
 \be
 \psi^{(KG)}(\vec{x},t) \ \ \to \ \
 \br \vec{x}|\psi^{(FV)}(t)\kt=
 \left (
 \ba
 {\rm i}\partial_t \psi^{(KG)}(\vec{x},t)\\
 \psi^{(KG)}(\vec{x},t)
 \ea
 \right )\,
  \label{assi}
 \ee
(cf. also Ref.~\cite{KGja}) and they replaced the hyperbolic partial
differential Eq.~(\ref{kgha}) by its parabolic partial differential
mathematical equivalent
 \be
 {\rm i} \frac{\partial}{\partial t} \,
 \,|\psi^{(FV)}(t)\kt=H_{(FV)}\,|\psi^{(FV)}(t)\kt
  \,.
 \ee
This is an evolution equation with stationary generator
 \be
 \ \ \
H_{(FV)}=
 \left (
 \begin{array}{cc}
 0&D\\
 I&0
 \ea
 \right )\neq H_{(FV)}(t)\,.
  \label{SEFekg}
 \ee
In the most common Hilbert space of states ${\cal H}^{(FV)}= {\cal
L}^2(\mathbb{R}^3)\bigoplus{\cal   L}^2(\mathbb{R}^3)$ such a
generator of the evolution of the FV wave functions cannot be
interpreted as standard Hamiltonian because it is manifestly
non-Hermitian, $ H_{(FV)} \neq H_{(FV)}^\dagger
 $.

A partial resolution of the puzzle was suggested by Pauli and
Weisskopf \cite{PW}. They noticed that, translated to the modern
language, the non-Hermitian operator $H_{(FV)}$ is ${\cal
PT}-$symmetric {\it alias\,} self-adjoint in an {\it ad hoc\,} Krein
space. This means that this operator is tractable as self-adjoint
with respect to a new, auxiliary, indefinite inner product,
 \be
 \br \psi_1|\psi_2\kt \ \ \to \ \
 \left ( \psi_1,\psi_2 \right )_{(Krein)}
 =\br \psi_1|{\cal P}_{(FV)}|\psi_2 \kt
\,.
 \label{tran}
 \ee
On this background, after the standard introduction of an external
electromagnetic field, the model can be perceived as a
charge-conserving evolution of the physical KG field in which the
number of particles is not conserved (for details see, e.g., chapter
XII of \cite{Const}).

The alternative, more natural Mostafazadeh's proposal (cf.
\cite{[150],[151]}) was based on the stationarity property
(\ref{SEFekg}). The idea consisted in the further change of the
inner product,
 \be
 \left ( \psi_1,\psi_2 \right )_{(Krein)}\ \ \to \ \
 \left ( \psi_1,\psi_2 \right )_{(Mostafazadeh)}
 =\br \psi_1|\Theta_{(stationary)}|\psi_2 \kt\,.
 \label{trans}
 \ee
The physics-restoring operators
$\Theta_{(stationary)}=\Theta_{(stationary)}^\dagger$ were required
{\em positive definite} (cf. also \cite{Geyer} and \cite{ali} in
this context). The initial, manifestly unphysical Hilbert space
${\cal H}^{(FV)}$ (with inner products $\br \psi_1|\psi_2\kt$) was
replaced by the Mostafazadeh's Hilbert space ${\cal H}^{(M)}$ which
differed from ${\cal H}^{(FV)}$ by the correct and physical inner
product (\ref{trans}) defined in terms of the so called metric
operator $\Theta_{(stationary)}$ \cite{Dieudonne}.
Eq.~(\ref{SEFekg}) was reinterpreted as Hamiltonian living in
physical Hilbert space ${\cal H}^{(M)}$ with inner product
(\ref{trans}).

In opposite direction, {\em any\,} Hamiltonian (\ref{SEFekg}) with
property
 \be
  H_{(FV)}^\dagger \Theta_{(stationary)}=\Theta_{(stationary)}
  H_{(FV)}\,
  \label{relac}
  \ee
may be perceived as self-adjoint with respect to the {\it ad hoc}
product (\ref{trans}). In the Klein-Gordon case, in particular,
relation (\ref{relac}) may be satisfied by the Hilbert-space metrics
of the closed block-matrix form
 \be
 \Theta_{(stationary)}=
 \left (
 \begin{array}{cc}
 1/\sqrt{D}&0\\
 0&\sqrt{D}
 \ea
 \right )\,.
 \label{thassi}
 \ee
Formulae (\ref{trans}) and (\ref{thassi}) yield the first-quantized
interpretation of the Klein-Gordon Eq.~(\ref{kgha}) in which the
probability density of finding the relativistic spinless massive
particle is never negative.

\subsection{Non-stationary scenario}

From the point of view of relativistic kinematics the stationary
position-dependence $m^2=m^2(\vec{x})$ of the mass term violates the
Lorentz covariance of the interaction. A privileged coordinate frame
must be used so that a weakening $m^2 = m^2(\vec{x},t)$ of the
assumption would be highly desirable.

\subsubsection{The NIP generator $G(t)$ of KG ket-vectors}

The acceptance of the hypothesis $m^2 = m^2(\vec{x},t)$ leads to the
replacement of Eq.~(\ref{kgha}) by its non-stationary form
 \be
 \left (
 \frac{\partial^2}{\partial t^2}+D(t)\right
 )\,\psi^{(KG)}(\vec{x},t)=0\,,
 \ \ \ \ \ \ D(t)=-\triangle +m^2(\vec{x},t)\,.
  \label{kghage}
 \ee
The same change of variables as above
defines the NIP ket-vector wave functions
 \be
  \br \vec{x}|\psi^{(NIP)}(t)\kt=
 \left (
 \ba
 {\rm i}\partial_t \psi^{(KG)}(\vec{x},t)\\
 \psi^{(KG)}(\vec{x},t)
 \ea
 \right )\,.
  \label{prassi}
 \ee
Its use converts Eq.~(\ref{kghage}) into the KG realization of the
non-stationary Schr\"{o}dinger Eq.~(\ref{SEFip}),
 \be
 {\rm i} \frac{\partial}{\partial t} \,|\psi^{(NIP)}(t)\kt=
\left (
 \begin{array}{cc}
 0&D(t)\\
 I&0
 \ea
 \right )\,|\psi^{(NIP)}(t)\kt\,
  \label{niSEFip}
 \ee
This equation contains the KG generator $ G_{(NIP)}(t)$.  Its
spectrum may, but need not, be real \cite{SIGMA}. In the current
literature it is called ``generator'' \cite{timedep,SIGMA},
``Hamiltonian'' \cite{Wang,Bila}, ``nonobservable Hamiltonian''
\cite{FringMou} or ``unobservable Hamiltonian'' \cite{FrFrith}.

\subsubsection{The NIP generator $G^\dagger(t)$ of KG bra-vectors}

B\'{\i}la's preprint \cite{Bila} deserves to be recalled as one of
the first studies of the methodical scenario in which one starts the
analysis of quantum dynamics from the knowledge of $G(t)$. Using an
elementary toy-model generator $G(t)$ the author solved
Eq.~(\ref{2.6}) and revealed that the qualitative features of the
metric $\Theta(t)$ depend rather strongly on the choice of this
metric operator at the initial time $t=t_i$. Later, similar
constructive case studies were published in
Refs.~\cite{IJTP,Wang,FringMou}.

The weak point of the B\'{\i}la's preprint lies in the complicated
operator form of equation~(\ref{2.6}). This key to evolution was
later called ``time-dependent quasi-Hermiticity relation''
\cite{FringMou}. In paragraph \ref{paramo} above we recommended a
more economical approach. We felt inspired by our recent studies
\cite{IJTP,Heisenberg} in which we analyzed the consequences of the
choice of a vanishing or constant $G(t)$, respectively. The role of
the initial conditions proved more important than expected while the
role of the generator itself appeared to be merely technical.
Moreover, the use of the non-stationary metric $\Theta(t)$ was
replaced by the solutions of the second Schr\"{o}dinger equation.

We came to the conclusion that one may skip the solution of
Eq.~(\ref{2.6}). In the implementation of the NIP recipe (based on
the knowledge of metric) one is strongly recommended to replace the
multiplicative definition of
 \be
 |\psi_\Theta^{(NIP)}\kt = \Theta(t)\,|\psi^{(NIP)}\kt
 \ee
by the KG form of the second Schr\"{o}dinger
Eq.~(\ref{d2.3}),
 \be
 {\rm i} \frac{\partial}{\partial t} \,|\psi_\Theta^{(NIP)}(t)\kt=
\left (
 \begin{array}{cc}
 0&I\\
 D^*(t)&0
 \ea
 \right )\,|\psi_\Theta^{(NIP)}(t)\kt\,.
  \label{niSEFip}
 \ee
It is worth noticing that one could even admit here the complex
effective mass terms $m^2(\vec{x},t) \notin \mathbb{R}$.

\subsubsection{The Coriolis NIP generator $\Sigma(t)$}

The replacement of the construction of operators by the construction
of vectors will simplify the calculations. This will also render
them feasible far beyond the most popular two-by-two matrix toy
models. One can predict that in the nearest future such a form of
the NIP formalism might also find multiple other applications, most
of which could closely parallel the existing uses of Hermitian
interaction picture. One can expect that in a way guided by the
parallel an enhanced attention will be paid to the models with the
dynamics dominated by the HP-resembling non-Hermitian generator
$\Sigma(t)$. Naturally, after one pre-selects $G(t)$, the choice of
$\Sigma(t)$ ceases to be unconstrained, mainly due to the deeply
physical role played by the experiment-related preparation of the
initial-state vectors (\ref{inijedna}) and (\ref{inidva}) and of the
related initial-instant energy $H^{(NIP)}(t_i)$ as prescribed by
Eq.~(\ref{lefdi}). Moreover, we know that the evolution of the
latter observable proceeds via the Heisenberg-type Eq.~(\ref{beda}),
i.e.,
 \be
 {\rm i\,}\frac{\partial}{\partial t} \,{H}^{(NIP)}(t)=
  H^{(NIP)}(t)\Sigma^{(NIP)}(t)
  -\Sigma^{(NIP)}(t)H^{(NIP)}(t)+K^{(NIP)}(t)
 \label{bedakat}
 \ee
or, equivalently,
 \be
 {\rm i\,}\frac{\partial}{\partial t} \,{H}^{(NIP)}(t)=
     G^{(NIP)}(t) H^{(NIP)}(t)
     -H^{(NIP)}(t)G^{(NIP)}(t)+K^{(NIP)}(t)
 \label{rebedakat}
 \ee
with some vanishing or pre-determined SP-variability term
 $$
K^{(NIP)}(t)=\Omega_{}^{(-1)}(t) {\rm
i\,}\dot{\mathfrak{h}}_{(SP)}(t)\Omega_{}(t)\,.$$ As long as
$\Sigma^{(NIP)}(t) = H^{(NIP)}(t) - G^{(NIP)}(t)$ there remains no
freedom in the choice of this operator. Subsequently, as long as we
have, from definition,
 \be
 {\rm i} \frac{\partial}{\partial t} \,\Omega^{(NIP)}(t)\kt=
\Omega^{(NIP)}(t)\,\Sigma^{(NIP)}(t)\,,
  \label{reniSEFip}
 \ee
the only ambiguity of $\Omega^{(NIP)}(t)$ is contained in its
initial-value specification.

\section{Discussion  \label{seci4}}

\subsection{Open problems\label{adiaba}}

In the non-stationary NIP formalism our intuition need not work and
our expectations may prove wrong. For this reason it is extremely
fortunate that one can simulate many unusual features of quantum
systems ${\cal S}$  using classical optics \cite{Samsonov2}.
Moreover, the results of the experimental tests and classical
optical simulations may also find an alternative, manifestly
non-unitary-evolution interpretations in the effective-operator
descriptions of open quantum systems \cite{Samsonov}.

In all of these dynamical regimes one of the most interesting open
questions is the problem of the domain of the survival of validity
of the usual adiabatic hypothesis. This hypothesis may be violated,
and violated in an unexpected manner~\cite{Stefan}. Thus, even in
the domain of the safely unitary quantum evolutions our
understanding of the limits of the validity of the adiabatic
approximation seems to be far from complete. The obstacles
encountered in such a context were summarized in preprint
\cite{Bila}. They may be separated into three subcategories. In the
first one one deals with the problems of {\em incompleteness} of the
information about dynamics \cite{Jones}. Even if we start from a
trivial $G(t)$, the whole NIP formalism degenerates to the
non-Hermitian Heisenberg picture~\cite{Heisenberg,cinani}. This
implies that the answers to all of the questions about stability
and/or instability remain strongly dependent on the number of
additional assumptions about the dynamics.

In the second subcategory of problems one encounters an {\em
ambiguity\,} related to the choice of the representation of the
correct physical Hilbert space ${\cal H}^{(S)}$. This form of
flexibility already represented a serious methodical challenge in
the traditional stationary models \cite{Geyer}). In the present
non-stationary context it reflects the freedom of preparation of the
system ${\cal S}$ in a pure state at $t=0$. It was again  B\'{\i}la
\cite{Bila} who addressed this problem in 2009. He illustrated the
relevance of the initial conditions via a two-by-two-matrix toy
model. The resulting demonstration of the existence of deeply
different evolution scenarios was sampled in Fig. Nr. 1 of {\it loc.
cit.}. Recently, a similar approach and analogous analysis of
another two-by-two matrix model were published in
Ref.~\cite{FringMou}.

The third group of the currently open questions concerns the {\em
economy} of the alternative construction strategies. In this sense
we offered here a new approach, circumventing partially the main
weakness of the above-cited constructions which were all based on
the solution of the operator evolution Eq.~(\ref{2.6}). Even in its
truncated, $N-$dimensional Hilbert space approximation this equation
is a rather complicated coupled set of ordinary differential
equations for as many as $N^2$ unknown matrix elements of
$\Theta(t)$. Although we managed to weaken the difficulty, we had to
postpone the practical numerical tests of the economy of the present
approach to a separate study. After all, the difficulty of such a
project is underlined by the freshmost study \cite{FringMou} in
which the solution of the operator evolution Eq.~(\ref{2.6})
(without a help by an ansatz for $\Omega(t)$) was only performed at
$N=2$.

\begin{table}[h]
\caption{Sample of notation conventions used in the literature. }
\label{pexp4}

\vspace{2mm}

\centering
\begin{tabular}{||c|l||c|c|c||}
\hline \hline
    {\rm symbol}&
    {\rm meaning} &
    {\rm Ref. \cite{ali}} &
    {\rm Ref. \cite{FringMou}} &
    {\rm Ref. \cite{Geyer}}
    \\
    \hline \hline
 $\Omega$ &
  {\rm Dyson's map, see Eq.~(\ref{mapoge})}&  $\rho$  & $\eta$&$S$ \\
 $\Theta$ & $=\Omega^\dagger\Omega$, the
   {\rm metric} &  $\eta_+$  & $\rho$&$\widetilde{T}$\\
 $|\psi\kt$
 &  {\rm state vector, Eq.~(\ref{SEFip})}&  $|\psi\kt$  & $\Psi$ &$|\Psi\kt$\\
 $G$&  {\rm the generator of kets, Eq.~(\ref{2.3})} &  $-$  & $H$&$-$ \\
 $\mathfrak{h}$&
  {\rm textbook Hamiltonian, Eq.~(\ref{udagobs})}&  $h$  & $h$ &$-$ \\
 $\Sigma$&  {\rm Coriolis Hamiltonian, Eq.~(\ref{defsig})}&  $-$  & unabbr.  &$-$\\
 $H$&
  {\rm observable Hamiltonian, Eq.~(\ref{dagobs})} &  $H$  & $\widetilde{H}$&$H$ \\
 $|\psi_\Theta\kt$&  {\rm dual state vector, Eq.~(\ref{dumapo})} &  $|\phi\kt$  &
    $\rho\Psi$ &$\widetilde{T}|\Psi\kt$\\
 \hline
 \hline
\end{tabular}
\end{table}

\subsection{Terminology: A Rosetta stone}

One of the consequences of the novelty of the NIP picture in both of
its stationary and non-stationary versions is the diversity of
notation (see Table~\ref{pexp4}). This is partly due to the
replacement of the traditional Hamiltonian in ${\cal H}^{(T)}$ by
the triplet of operators acting in ${\cal H}^{(F)}$. All of them
participate in the description of the unitary evolution of a given
system ${\cal S}$ so that all of them could be called
``Hamiltonians''. In \cite{SIGMA} our present convention was
developed to distinguish, without subscripts or superscripts,
between the observable $H(t)$ of Eq.~(\ref{udagobs}) and the two not
necessarily observable generators $G(t)$ (changing the state-vector
kets via Schr\"{o}dinger-type Eq.~(\ref{2.3})) and $\Sigma(t)$
(prescribing the time-dependence of every operator of an observable
via the respective Heisenberg-type Eq.~(\ref{beda})).

In the non-stationary NIP formalism it is rather counterintuitive
that the generators $G(t)$ and $\Sigma(t)$ are, in general, not
observable while the observable Hamiltonian $H(t)$ itself is just
their sum. Some authors decided to treat $G(t)$ as ``the''
Hamiltonian \cite{FringMou,Bila,FrFrith}. No comments are usually
made on the fact that its spectrum is, in general,  complex even if
the evolution of system ${\cal S}$ itself is unitary.


Besides the single-observable scope of predictions (\ref{redysMEAS})
we are often interested in the knowledge of metric for some other
reasons. A few of them may be found listed in Ref.~\cite{Batal}. In
paragraph \ref{neramo} we also emphasized the formal equivalence
between the respective ``time-dependent'' and ``time-independent''
quasi-Hermiticity relations (\ref{2.6}) and (\ref{dagobs}) for the
metric. Now we have to add that this is certainly not an equivalence
from the point of view of applications. The latter relation is
algebraic while the former one is differential, much more
complicated to solve. For this reason, even when one needs to know
the metric, its reconstruction should proceed via
Eq.~(\ref{dagobs}). The preference of Eq.~(\ref{2.6}) by some of the
above-cited authors looks rather naive.

A slightly discouraging aspect of Eq.~(\ref{dagobs}) may be seen in
the fact that one has to determine {\em both} of the ``unknown''
operator components $H(t)$ and $\Theta(t)$ of Eq.~(\ref{dagobs}) at
once. A hint to the resolution of the apparent paradox was provided
in paragraph \ref{koramo} above. We argued there that the observable
Hamiltonian $H(t)$ must be perceived as carrying a {\em necessary}
addition to the input information about dynamics. In other words,
even after an exhaustive exploitation of the knowledge of $G(t)$ the
definition of quantum system ${\cal S}$ remains as incomplete as in
the above-mentioned Heisenberg picture where $G_{(HP)}(t)=0$.

\section{Summary  \label{seci5}}

Any quantum system ${\cal S}$ can be described using any formulation
{\it alias} picture, in principle at least \cite{Messiah}. The
textbooks offer a more or less standard menu of pictures, each one
of which may prove technically most appropriate for a specific
${\cal S}$ \cite{nine}. {\it Vice versa}, the development of new
pictures of unitary evolution may help to extend the class of
tractable quantum systems ${\cal S}$. Our paper may be read as an
illustration of the latter statement. We recalled the
``three-Hilbert-space'' picture of Ref.~\cite{SIGMA} and we
demonstrated that its NIP reformulation may offer new insights in
relativistic quantum mechanics.

Our specific task of a consistent and constructive interpretation of
non-stationary Klein-Gordon quantum-mechanics systems had several
aspects. The most important one may be seen in having us forced to
work with Schr\"{o}dinger equations containing unobservable
generators $G(t)$ and/or $\Sigma(t)$. This is a methodical feature
of the new theory which looks counterintuitive \cite{web}. After one
admits that the Dyson's maps can be time-dependent, the acceptance
of the unobservability becomes natural, understandable and well
founded \cite{SIGMA,FringMou,which}.

A few equally important phenomenological challenges connected with
our present first-quantization approach to the non-stationary
Klein-Gordon system were listed as open problems. We provided
several answers to some of them. Incidentally, we improved also the
computational {\em economy\,} of the evaluation of the experimental
predictions (\ref{redysMEAS}). The feasibility of their computation
may be perceived as a very core of the theory. Typically, the
authors of Refs.~\cite{FringMou,Bila,FrFrith} used these prediction
formulae in combination with the metric-multiplication definition
(\ref{amelio}) where one needs to know the metric. Thus, these
authors had to solve the {\em operator\,} evolution
Eq.~(\ref{2.6})). In contrast we noticed that the relevant
information about the metric is all carried by the {\em single
bra-vector} $\br \psi_\Theta(t)|$. Thus, we replaced the latter
recipe by the direct solution of the {\em vector\,}-evolution
Schr\"{o}dinger-type Eq.~(\ref{d2.3}). In this way we eliminated the
complicated, Heisenberg-equation-resembling evolution
rule~(\ref{2.6}) as redundant. We explained that such an operator
differential-equation rule is just a combination of identity
(\ref{identity}) with the much simpler, more natural and fully
transparent algebraic condition of observability (\ref{dagobs}).

We may summarize that the key mathematical feature of our present
recipe is that it prescribes a specific time-evolution of the
physical inner product in the Hilbert space of states. This leads to
the standard probabilistic interpretation of the theory while
opening new perspectives. In the context of physics, one of the most
concise characteristics of our present approach to the
non-stationary Klein-Gordon problem may be seen in its parallelism
to the conventional interaction picture. The main difference from
the latter formulation of quantum dynamics (and from its
non-Hermitian stationary SP formulations \cite{ali}) is that we
admit that the generators $G(t)$ and $\Sigma(t)$ need not be
observable in general. Still, the unitarity of the underlying
physical system is preserved, guaranteed by the fact that the
manifestly non-unitary evolution of $|\psi^{(NIP)}(t) \rangle$
(controlled by a Schr\"{o}dinger-type equation) is {\em strictly
compensated} by the non-unitarity of the evolution of the operators
of observables (controlled, in parallel, by non-Hermitian
Heisenberg-type equations).

\section*{Acknowledgement}

Research supported by the GA\v{C}R Grant Nr. 16-22945S.

\newpage

 \end{document}